\newtheorem{theorem}{Theorem}
\theoremstyle{plain}
\newtheorem{lemma}{Lemma}
\newtheorem{preremark}{Remark}
\numberwithin{equation}{section}
\begin{document}
\title[Classical Solvability for Vlasov-Maxwell]{Classical solvability of the Relativistic Vlasov-Maxwell system with bounded spatial density}
\author{Reinel Sospedra-Alfonso}
\author{Reinhard Illner}
\address
{Department of Mathematics and Statistics,\newline%
\indent University of Victoria, PO BOX 3045 STN CSC, Victoria BC V8W 3P4.}%
\email[R. Sospedra-Alfonso]{sospedra@math.uvic.ca}%
\email[R. Illner]{rillner@math.uvic.ca}%

\date{Jun 24, 2009}
\keywords{Vlasov-Maxwell, Classical Solution, Bounded Spatial Density.}%

\begin{abstract}
In \cite{GS1}, Glassey and Strauss showed that if the growth in the momentum of the particles is controlled, then the relativistic Vlasov-Maxwell system has a classical solution globally in time. Later they proved that such control is achieved if the \textit{kinetic energy density} of the particles remains bounded for all time \cite{GS2}. Here, we show that the latter assumption can be weakened to the boundedness of the \textit{spatial density}. 
\end{abstract}
\maketitle

\section{Introduction}
The relativistic Vlasov-Maxwell (RVM) system describes the time evolution of a collisionless plasma whose particles interact through the self-induced electromagnetic field. The plasma is assumed to be at high temperatures, thus the particles may travel at speeds comparable to the speed of light. For a single species, the model equations are  
\begin{equation}
\partial_tf+v\cdot\nabla_{x}f+\left(E+v\times B\right)\cdot\nabla_{p}f=0
\label{Vlasov Equation}
\end{equation}
\begin{equation}
\partial_tE-\nabla\times B=-j
\label{Maxwell Evolution 1}
\end{equation}
\begin{equation}
\partial_tB+\nabla\times E=0
\label{Maxwell Evolution 2}
\end{equation}
\begin{equation}
 \begin{tabular}{cc}
    $\nabla\cdot E=\rho$, & $\nabla\cdot B=0$.
 \end{tabular}
\label{Maxwell Constraints}
\end{equation}
We have set to unity the mass and charge of the particles as well as the speed of light. The density 
function $f=f(t,x,p)$ depends on time $t\in(0,\;\infty)$, position $x\in\mathbb{R}^3$ and momentum $p\in\mathbb{R}^3$. $E=E(t,x)$ and $B=B(t,x)$ denote the electric and magnetic fields, respectively, and $v$ stands for the relativistic velocity, i.e.,
\begin{displaymath}
v=\frac{p}{\sqrt{1+\left|p\right|^2}}.
\end{displaymath}

The system (\ref{Vlasov Equation}-\ref{Maxwell Constraints}) is coupled via the charge (or spatial) 
density $\rho=\rho(t,x),$ and the current density $j=j(t,x),$  given by 
\begin{equation}
 \begin{tabular}{cc}
  $j=\int_{\mathbb{R}^3}vfdp$, & $\rho=\int_{\mathbb{R}^3}fdp$.
 \end{tabular}
\label{Density and Current}
\end{equation}

The Cauchy problem to the RVM system is (\ref{Vlasov Equation}-\ref{Maxwell Constraints}) with initial data
\begin{equation}
 \begin{tabular}{ccc}
   $f_{|t=0}=f_0$, & $E_{|t=0}=E_0$, & $B_{|t=0}=B_0$
 \end{tabular}
\label{Initial Data} 
\end{equation}
satisfying the constraints (\ref{Maxwell Constraints}).

The global in time classical solvability for this system remains an open problem. On the other hand, the existence of local solutions was first proved in \cite{Wollman} for $f_0$ having compact support. The key result on global existence is due to Glassey and Strauss and can be found in \cite{GS1}. They showed that local solutions may be extended globally in time if the momentum support of $f$ is controlled for all $t$. Hence the implication that, for smooth initial data, a singularity could occur only if some particles travel at speeds arbitrarily close to the speed of light. Two different versions of this result are given in \cite{KlainermanStafillani,Bouchut}. An extensive review on the RVM system is provided in the monograph \cite{Glassey Book}.   

In \cite{GS2}, Glassey and Strauss weakened the assumption made in \cite{GS1}. They showed that if the kinetic energy density of the particles (defined in (\ref{Kinetic Energy Particles}) below) is assumed to be bounded for all times, then the problem can be reduced to the situation studied in \cite{GS1}, i.e., to the boundedness of the momentum support of $f$. Recently, Pallard improved 
this result for compactly supported initial data. In \cite{Pallard}, he showed that if 
\begin{equation}
\label{Lq Moment}
\sup_{0\leq t\leq T}\left\|\int_{\mathbb{R}^3}\left(1+\left|p\right|^2\right)^{\frac{\theta}{2}}
f(t,\cdot,p)dp\right\|_{L^q_x}<\infty,
\end{equation}
with $6\leq q\leq \infty$ and $\theta>4/q$ (strict), then the $p$-support of $f$ is uniformly bounded for all $0\leq t\leq T$. The assumption made in \cite{GS2} corresponds to the case $(\theta=1,q=\infty)$. Here, we weaken the assumption in \cite{Pallard} a little bit more: we prove that if the spatial density of the particles remains bounded for all times, i.e., if (\ref{Lq Moment}) holds for $(\theta=0,q=\infty)$, then the problem can be reduced to that in \cite{GS1} and therefore, the existence of classical solutions globally in time follows. 


\section{Preliminaries}
\label{Pre}

Consider a $C^1$ solution of the system (\ref{Vlasov Equation}-\ref{Density and Current}). We abbreviate the Lorentz force by $K=E+v\times B$. It clearly satisfies an estimate $\left|K\right|\leq\left|E\right|+\left|B\right|=:\bar{\left|K\right|}$ since $|v|\leq1$. Further, let $\omega=\left(x-y\right)/\left|x-y\right|$. Following \cite{GS1} and \cite{GS2}, we represent the components of the electric field by

\begin{equation}
\label{Full Representation}
  4\pi E^i(t,x)=(E^i)_0(t,x)+E^i_T(t,x)+E^i_S(t,x),
\end{equation}
where $(E^i)_0$ depends on the initial data and  

\begin{equation}
\label{E_S}  
E^i_T(t,x)=\int_{\left|x-y\right|\leq t}\int_{\mathbb{R}^3}a^i(\omega,v)f(t-\left|x-y\right|,y,p)dp
\frac{dy}{\left|x-y\right|^2},
\end{equation}

\begin{equation}
\label{E_T}  
E^i_S(t,x)=\int_{\left|x-y\right|\leq t}\int_{\mathbb{R}^3}\nabla_pb^i(\omega,v)
\cdot(Kf)(t-\left|x-y\right|,y,p)dp\frac{dy}{\left|x-y\right|},
\end{equation}
with kernels $a^i(\omega,v)$ and $b^i(\omega,v)$ satisfying the estimate 

\begin{equation}
\label{Estimates Kernels}
  \left|a^i(\omega,v)\right|+\left|\nabla_pb^i(\omega,v)\right|\leq c\sqrt{1+\left|p\right|^2}.
\end{equation}

If we denote the kinetic energy density\footnote{Actually, this is the \textit{mechanical} 
energy density. The kinetic energy differs from the mechanical energy by the energy of the particles at rest, which is a constant. Nevertheless, we follow the previous convention.} of the particles by
\begin{equation}
\label{Kinetic Energy Particles}
  h(t,x)=\int_{\mathbb{R}^3}\sqrt{1+\left|p\right|^2}f(t,x,p)dp, 
\end{equation}
and use the substitution $y=x-\omega\left(t-s\right)$ in (\ref{E_S}) and (\ref{E_T}), we see that 
\begin{equation}
\label{Estimate E_S}  
\left|E_T(t,x)\right|\leq c\int^t_0\int_{\left|\omega\right|=1}h(s,x-\omega\left(t-s\right))d\omega ds
\end{equation}
and
\begin{equation}
\label{Estimate E_T}  
\left|E_S(t,x)\right|\leq c\int^t_0 \left(t-s\right)\int_{\left|\omega\right|=1}
(h\bar{\left|K\right|})(s,x-\omega\left(t-s\right))d\omega ds.
\end{equation}

Similarly, we can represent the magnetic field as in (\ref{Full Representation}-\ref{E_T}), with kernels $a^i(\omega,v)$ and $b^i(\omega,v)$ that can be bounded as in (\ref{Estimates Kernels}). The estimates (\ref{Estimate E_S}) and (\ref{Estimate E_T}) remain valid for $B_S$ and $B_T$ as well. Now, if we gather all these estimates and use the representation (\ref{Full Representation}) for $E$ and the corresponding representation for $B$, we observe that  
\begin{equation}
\label{Energy Hypothesis}
  \sup_{0\leq t\leq T}\sup_{x\in\mathbb{R}^3}h(t,x)\leq c_T
\end{equation}
implies 
\begin{equation}
\label{Gronwall Field}
  \sup_{x\in\mathbb{R}^3}\left(\left|E(t,x)\right|+\left|B(t,x)\right|\right)\leq c_T\left\{1+\int^t_0\sup_{x\in\mathbb{R}^3}\left(\left|E(s,x)\right|+\left|B(s,x)\right|\right)ds\right\}.
\end{equation}
The Gronwall lemma applies and the field remains bounded for all $t\leq T$.

On the other hand, the characteristics associated to the equation (\ref{Vlasov Equation}) are solutions of
 \begin{eqnarray}
 \label{Characteristics X}
 \dot{X}(t,x_0,p_0) & = & V(P(t,x_0,p_0))\\ 
 \dot{P}(t,x_0,p_0) & = & K(t,X(t,x_0,p_0),P(t,x_0,p_0)),\
 \label{Characteristics P}
 \end{eqnarray}
with $(X(0,x_0,p_0),P(0,x_0,p_0))\equiv(x_0,p_0)\in\text{supp}f_0$. Thus,
 \begin{equation}
 \label{Estimate Characteristics}
   \left|P(t,\cdot)\right|\leq\left|p_0\right|+\int^t_0\left(\left|E(s,X(s,\cdot))\right|+\left|B(s,X(s,\cdot))\right|\right)ds.
 \end{equation}

Hence, the boundedness of the field implies the boundedness of the momentum support of $f$, as long as $f_0$ has compact support in $p$. Now, if we apply the above reasoning to the approximate sequence of solutions $\left\{(f^n,E^n,B^n)\right\}$ introduced in \cite{GS1} (and which we will present in the last section), we have that the $p$-supports of $\left\{f^n\right\}$ are uniformly bounded in $n$. But this guarantees the global existence and uniqueness of classical solutions to the system (\ref{Vlasov Equation}-\ref{Initial Data}), as stated in \cite[Theorem 1]{GS1}. Therefore, the assumption ($\ref{Energy Hypothesis}$) (as well as for its approximations $h^n$, uniformly in $n$) implies the existence and uniqueness of classical solutions to the RVM system for all time. This is the result proved in \cite{GS2}. In this article we show that the condition ($\ref{Energy Hypothesis}$) can be relaxed to the boundedness of the spatial density. In detail, we prove

\begin{theorem} 
Let $f_0\in C^1_0(\mathbb{R}^6)$ and $(E_0,B_0)\in C^2_0(\mathbb{R}^3)$ satisfy the constraints (\ref{Maxwell Constraints}). For $T>0$, assume that  
\begin{displaymath}
   \sup_{0\leq t\leq T}\sup_{x\in\mathbb{R}^3}\int_{\mathbb{R}^3}f(t,x,p)dp\leq c_T,
\end{displaymath}
as well as for its approximations $f^n$ defined in (\ref{Vlasov Recursive}), uniformly in $n$. Then, there exists a unique $C^1$ solution for all time of the corresponding Cauchy problem to the RVM system.
\end{theorem}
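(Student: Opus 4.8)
The plan is to follow the scheme of \cite{GS1,GS2}: reduce the theorem to an a priori bound on the electromagnetic field on $[0,T]$, and then prove that bound from the assumed bound on the spatial density, this time by returning to the light–cone representation (\ref{Full Representation})–(\ref{E_T}) rather than to the kinetic–energy estimates (\ref{Estimate E_S})–(\ref{Estimate E_T}) recalled in Section~\ref{Pre}. Concretely, set $\mathcal K(t):=\sup_{x}\bigl(|E(t,x)|+|B(t,x)|\bigr)$ and let $P(t)$ be the radius of a ball containing the $p$–support of $f(t,\cdot,\cdot)$, so that (\ref{Estimate Characteristics}) gives $P(t)\le P_0+\int_0^t\mathcal K(s)\,ds$ with $P_0$ depending only on $\text{supp}\,f_0$. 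By \cite[Theorem~1]{GS1}, a bound on $P(t)$ on $[0,T]$ — for the iterates $f^n$ of (\ref{Vlasov Recursive}), uniformly in $n$ — produces the global $C^1$ solution together with its uniqueness. Since all the estimates below apply verbatim to $(f^n,E^n,B^n)$ (the iterates obey the same representation (\ref{Full Representation})–(\ref{E_T}) with source $f^n$, each $f^n$ is transported by its own characteristic flow so that $\|f^n(t)\|_{L^\infty_{x,p}}=\|f_0\|_\infty$, and by hypothesis $\sup_x\int f^n(t,x,p)\,dp\le c_T$ uniformly in $n$), it suffices to show that $\sup_x\int f(t,x,p)\,dp\le c_T$ on $[0,T]$ forces $\sup_{[0,T]}\mathcal K<\infty$.

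The first thing to record is that the naive approach is too lossy. Since the $p$–support of $f(t,x,\cdot)$ lies in $\{|p|\le P(t)\}$, the sharpest pointwise bound on the kinetic energy density (\ref{Kinetic Energy Particles}) is $h(t,x)\le\sqrt{1+P(t)^2}\int f(t,x,p)\,dp\le c_T\sqrt{1+P(t)^2}$; conservation of total energy gives in addition $\|h(t)\|_{L^1_x}+\|E(t)\|_{L^2_x}^2+\|B(t)\|_{L^2_x}^2\le C$. Feeding $h\le c_T\sqrt{1+P(t)^2}$ into (\ref{Estimate E_S})–(\ref{Estimate E_T}) and the twin bounds for $B$ yields only $\mathcal K(t)\lesssim_T\sqrt{1+P(t)^2}\,\bigl(1+\int_0^t\mathcal K\bigr)$, which together with $P(t)\le P_0+\int_0^t\mathcal K$ is a Riccati inequality and may blow up at finite time. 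The loss is located entirely in the $S$–terms $E_S,B_S$, through the weight $(t-s)$ and the a priori uncontrolled field factor $\bar{|K|}$, so these terms must be estimated without first replacing $h$ by its supremum.

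The refined estimate, which is the crux, re-examines (\ref{E_S})–(\ref{E_T}) directly. Using the Vlasov equation in the form $(\partial_t+v\cdot\nabla_x)f=-K\cdot\nabla_pf$ one writes $E_S$ through $\nabla_pb^i$ acting on $Kf$ as in Section~\ref{Pre}, and then splits the kernel, following Glassey–Strauss, into a component that — after an integration by parts along the backward light cone — carries the $|x-y|^{-2}$ kernel of a $T$–term, plus a genuinely $S$–type remainder. After the substitution $y=x-\omega(t-s)$ the angular integrals are of the type $\int_{|\omega|=1}(1+\omega\cdot v)^{-k}\,d\omega$, and the momentum bound gives $1-|v|^2\ge(1+P(t)^2)^{-1}$, hence $\int_{|\omega|=1}(1+\omega\cdot v)^{-1}\,d\omega\lesssim\log(e+P(t))$ and $\int_{|\omega|=1}(1+\omega\cdot v)^{-2}\,d\omega\lesssim1+P(t)^2$. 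The decisive point of the $\theta=0$ endpoint is that, in the worst surviving contribution, the momentum integral is absorbed entirely by $\sup_x\int f\,dp\le c_T$, so that only the \emph{logarithmic} angular factor remains; handling the milder pieces by the crude bound on $h$ together with the energy bound, one is led to a closed inequality of Osgood type,
\[
\mathcal K(t)\ \le\ c_T\Bigl(1+\int_0^t\mathcal K(s)\,\log\bigl(e+\mathcal K(s)\bigr)\,ds\Bigr),\qquad 0\le t\le T,
\]
whose solutions stay finite on $[0,T]$ because $\int^{\infty}du/\bigl(u\log(e+u)\bigr)=\infty$. This bounds $\mathcal K$, hence $P$, on $[0,T]$ — uniformly in $n$ for the iterates — and the reduction of the first paragraph then finishes the proof.

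The main obstacle is exactly the third step: showing that, when only the spatial density is controlled, the $S$–contributions — with their $(t-s)$ weight and the not-yet-bounded field factor $\bar{|K|}$ — lose at most a logarithmic power of the momentum support rather than a full power of $\sqrt{1+P(t)^2}$. Everything else (the reduction to a field bound, the behaviour of the iterates, the Osgood argument) is routine once this sharper control of the $S$–kernels is in hand; this is where the endpoint character of the theorem resides and where the most delicate Glassey–Strauss kernel estimates are needed.
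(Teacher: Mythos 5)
Your reduction to \cite[Theorem~1]{GS1} and your diagnosis of why the naive estimate fails (the Riccati structure coming from $h\le c_T\sqrt{1+P(t)^2}$ in the $S$--terms) are both correct, but the step you call the crux is simply asserted, not proved, and your own closing paragraph concedes this. The claimed Osgood inequality $\mathcal K(t)\le c_T\bigl(1+\int_0^t\mathcal K\log(e+\mathcal K)\,ds\bigr)$ for the \emph{pointwise} field norm is never derived: you state that after a further Glassey--Strauss kernel splitting ``the momentum integral is absorbed entirely by $\sup_x\int f\,dp\le c_T$, so that only the logarithmic angular factor remains,'' but there is a concrete structural obstruction to this. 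The logarithmic gain $\int_{|\omega|=1}(1+\omega\cdot v)^{-1}d\omega\lesssim\log(e+P(t))$ is obtained by integrating in $\omega$ at \emph{fixed} $p$, whereas using $\sup_x\int f(t,x,p)\,dp\le c_T$ requires performing the $p$--integral at \emph{fixed} $\omega$ (fixed base point $y=x-\omega(t-s)$); if instead you take the supremum over $\omega$ of the kernel at fixed $p$ you pay $\sup_\omega(1+\omega\cdot v)^{-1}\sim 1+|p|^2$ and recover the full power of $\sqrt{1+P(t)^2}$, i.e.\ the Riccati loss you set out to avoid. Moreover the $S$--term carries the factor $\bar{|K|}$ inside the light--cone integral, and with only $\sup_x$--control of the field this again produces $\sup_x h\cdot\int_0^t\mathcal K$; your passing appeal to the energy bound $\|E\|_{L^2}^2+\|B\|_{L^2}^2\le C$ is never connected to the pointwise estimate you need (and, incidentally, exact energy conservation is not available for the iterates $f^n$, which is why the paper proves the $L^2$ field bound separately from the Poynting identity and $\|j\|_{L^2}^2\le\|\rho\|_{L^\infty}\|\rho\|_{L^1}$).

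The paper's proof circumvents exactly this difficulty by \emph{not} attempting any pointwise bound on the field. Instead it bounds the time integral of the field along characteristics: from $\dot W(P(t))=V\cdot E(t,X(t))$ one gets $W(t)\le W(0)+\int_0^t|E(s,X(s))|\,ds$, and the $E_S$--contribution is estimated by Cauchy--Schwarz on the backward light cone using Pallard's change of variables $(s,\theta,\phi)\mapsto X(s)-\omega(s-\sigma)$, whose Jacobian $(\dot X\cdot\omega-1)(s-\sigma)^2\sin\phi$ converts the light--cone integral of $(h\bar{|K|})^2$ into $\|h\bar{|K|}\|_{L^2_x}\le\|h\|_{L^\infty_x}\|\bar{|K|}\|_{L^2_x}\le c_T\bar W(\sigma)$ (this is where the $L^2$ field bound of Lemma~\ref{Poynting} absorbs the uncontrolled factor $\bar{|K|}$), while the logarithm arises from the angular integral $\int_{|\omega|=1}(1-\dot X(s)\cdot\omega)^{-1}d\omega\lesssim 1+\ln\bar W(s)$, which involves only the single velocity $\dot X(s)$ of the base characteristic and therefore creates no conflict with the $p$--integration. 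This yields the log--Gronwall inequality (\ref{The Most Important One}) for $\bar W(t)$, which is weaker than your claimed field bound but suffices for the momentum--support bound and hence for \cite[Theorem~1]{GS1}. As written, your proposal leaves the decisive estimate unestablished, and the particular route you sketch (pointwise field bounds at the $\theta=0$ endpoint from kernel splittings alone) is unlikely to close without importing precisely the characteristic--integral and $L^2$--duality mechanism above.
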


Our proof relies on two key observations. The first one is that we could avoid the uniform estimates on the field used to bound the $p$-support of $f$ in (\ref{Estimate Characteristics}), (as, for instance, the one that derives from (\ref{Gronwall Field})), if instead we use estimates on the time-integral of the field along characteristics. This was already noticed in \cite{Pallard} and here we use modifications to some of the estimates provided there. Estimates of this type were successfully used in \cite{Calogero} as well.  

As for the second observation, define
\begin{displaymath}
  W(p)=\sqrt{1+\left|p\right|^2}.
\end{displaymath}
Clearly, $\dot{W}(p)=v\cdot\dot{p}$. Thus, since $v\cdot\left(v\times B\right)\equiv0$, along the characteristics solving (\ref{Characteristics X}-\ref{Characteristics P}) we have that  
 \begin{equation}
 \dot{W}(P(t,x_0,p_0)) = V(P(t,x_0,p_0))\cdot E(t,X(t,x_0,p_0)).
 \label{Characteristics Like W}
 \end{equation}
Since $\left|v\right|\leq1$, we obtain
\begin{equation}
\label{Estimate Characteristics Like}
  W(t)\leq\ W(0)+\int^t_0\left| E(s,X(s))\right|ds,
 \end{equation}
where we have abbreviated $W(t)=W(P(t,\cdot))$ and $E(s,X(s))=E(s,X(s,\cdot))$, respectively. 

Our strategy is to reduce (\ref{Estimate Characteristics Like}) to a Gronwall-type inequality by using estimates on the time-integral of the field along characteristics mentioned above. In turn, this will provide the uniform bound on the $p$-support of $f$ that guarantee the existence of classical solutions to the RVM system for all time. We remark that (\ref{Characteristics Like W}) arises naturally in the covariant formulation of the electrodynamics of the relativistic particle \cite[Chapter 12]{Jackson}. The quantity $W(p)$ is the kinetic energy of a single particle, and the set of equations (\ref{Characteristics P}) and (\ref{Characteristics Like W}) describes the general motion of a charged particle in an external electromagnetic field. For the present case that field is induced by the remaining charges of the system and it is computed by means of the Maxwell equations (\ref{Maxwell Evolution 1}-{\ref{Maxwell Constraints}}).

  
\section{Bounding $W(p)$}

Let $(f,E,B)$ be a $C^1$ solution of the system (\ref{Vlasov Equation}-\ref{Density and Current}). Define 
\begin{eqnarray}
  \bar{p}(t) & = & \sup\left\{\left|p\right|:\exists 0\leq s\leq t, x\in\mathbb{R}^3:f(s,x,p)\neq0\right\}\nonumber\\
  & \equiv & \sup\left\{\left|P(s,x_0,p_0)\right|:0\leq s\leq t,(x_0,p_0)\in\texttt{supp} f_0\right\},\nonumber
\end{eqnarray}
and set 
\begin{equation}
\label{Definitions}
\bar{v}(t)=\frac{\bar{p}(t)}{\sqrt{1+\bar{p}^2(t)}},\hspace{1.0cm} \bar{W}(t)=\sqrt{1+\bar{p}^2(t)}. 
\end{equation}
Throughout this section, we assume that for each $0\leq t\leq T$ and $(x_0,p_0)\in\texttt{supp} f_0$
\begin{equation}
\label{Condition Finite Momentum}
  \sup_{0\leq s\leq t}|\dot{X}(s,x_0,p_0)|\leq\bar{v}(t)<1,
\end{equation}
and that  
\begin{equation}
  \label{Assumption 2}
  \sup_{0\leq t\leq T}\sup_{x\in\mathbb{R}^3}\int_{\mathbb{R}^3}f(t,x,p)dp\leq c_T.
\end{equation}
The assumption 
(\ref{Assumption 2}) readily implies
\begin{equation}
\label{W Bounds h}
  h(t,x)\leq\int_{\left|p\right|\leq\bar{p}(t)}\sqrt{1+\left|p\right|^2}f(t,x,p)dp\leq c_T\bar{W}(t).
\end{equation}

Less straightforward is the following implication:
\begin{lemma} 
\label{Poynting}
If (\ref{Assumption 2}) holds, then
 \begin{equation}
   \label{Assumption 3}
   \sup_{0\leq t\leq  T}\int_{\mathbb{R}^3}\left(\left|E(t,x)\right|^2+\left|B(t,x)\right|^2\right)dx\leq c_T.
 \end{equation}
\end{lemma}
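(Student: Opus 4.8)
The natural quantity to control is the total electromagnetic energy
\[
\mathcal{E}(t)=\frac{1}{2}\int_{\mathbb{R}^3}\bigl(|E(t,x)|^2+|B(t,x)|^2\bigr)\,dx,
\]
whose time derivative is governed by the Poynting identity. Differentiating and using the Maxwell evolution equations \eqref{Maxwell Evolution 1}--\eqref{Maxwell Evolution 2}, the curl terms combine into a divergence that integrates to zero (fields decay, since data are compactly supported and propagation speed is one), leaving
\[
\frac{d}{dt}\mathcal{E}(t)=-\int_{\mathbb{R}^3}E(t,x)\cdot j(t,x)\,dx.
\]
So the plan is: (i) establish this energy identity rigorously for the $C^1$ solution; (ii) bound the right-hand side by something controllable under \eqref{Assumption 2}; (iii) close a Gronwall argument in $\mathcal{E}(t)$.

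The crux is step (ii): estimating $\bigl|\int E\cdot j\,dx\bigr|$. Writing $j=\int v f\,dp$ and using $|v|\le 1$, we have $|j(t,x)|\le \rho(t,x)\le c_T$ pointwise by \eqref{Assumption 2}. However $\rho$ is compactly supported in $x$ only if we track the support of $f$; since $f_0$ has compact support and the characteristic speeds are bounded by $\bar v(t)<1$ via \eqref{Condition Finite Momentum}, the $x$-support of $f(t,\cdot,\cdot)$ stays inside a ball $B(0,R_0+t)$. Hence $j(t,\cdot)\in L^1_x$ with $\|j(t)\|_{L^1_x}\le c_T\,|B(0,R_0+T)|=:\tilde c_T$, and also $\|j(t)\|_{L^2_x}\le c_T\,|B(0,R_0+T)|^{1/2}$. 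Then by Cauchy--Schwarz,
\[
\Bigl|\int_{\mathbb{R}^3}E\cdot j\,dx\Bigr|\le \|E(t)\|_{L^2_x}\,\|j(t)\|_{L^2_x}\le c_T\,\|E(t)\|_{L^2_x}\le c_T\bigl(1+\mathcal{E}(t)\bigr).
\]
Plugging into the energy identity gives $\mathcal{E}'(t)\le c_T(1+\mathcal{E}(t))$, and since $\mathcal{E}(0)<\infty$ (as $(E_0,B_0)\in C^2_0$), Gronwall yields $\mathcal{E}(t)\le c_T$ on $[0,T]$, which is \eqref{Assumption 3}.

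The main obstacle I anticipate is not any single estimate but making the Poynting identity honest: justifying the integration by parts (no boundary contribution) and the differentiation under the integral sign. This is where compact support in $x$ of the fields — a consequence of finite propagation speed plus compactly supported initial data $(E_0,B_0)$ and source $f_0$ — does the real work, and it should be stated explicitly. A secondary point worth a line: \eqref{Assumption 2} alone bounds $\rho$ in $L^\infty_x$ but we genuinely need the $L^1_x\cap L^2_x$ control of $j$, which is why the support argument via \eqref{Condition Finite Momentum} is invoked rather than \eqref{Assumption 2} in isolation. Everything else is routine.
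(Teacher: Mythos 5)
Your proposal is correct and follows the paper's overall strategy: the Poynting energy identity (the paper's equation (\ref{Uno})), a Cauchy--Schwarz/Young bound on $\int j\cdot E\,dx$, and Gronwall. The only genuine difference is how you control $\|j(t)\|_{L^2_x}$. You do it by tracking the $x$-support of $f$: since $f_0$ is compactly supported and characteristics move with speed at most one, $\rho(t,\cdot)$ is supported in a ball of radius $R_0+t$, so the pointwise bound $\rho\le c_T$ from (\ref{Assumption 2}) gives $\|j(t)\|_{L^2_x}\le c_T\,|B(0,R_0+T)|^{1/2}$. The paper instead uses the interpolation $\|j(t)\|_{L^2_x}^2\le\|\rho(t)\|_{L^\infty_x}\|\rho(t)\|_{L^1_x}$ together with the conservation $\|\rho(t)\|_{L^1_x}=\|f(t)\|_{L^1_{x,p}}=\|f_0\|_{L^1_{x,p}}$, which follows because $f$ is constant along the (measure-preserving) characteristic flow. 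The paper's route buys two things: it needs no support tracking at all (in particular it does not invoke (\ref{Condition Finite Momentum}), which in any case is unnecessary for your support argument since $|v|\le1$ already gives finite propagation speed), and it transfers verbatim to the iterates $f^n$ of Section 4, where the paper simply notes $\|f^n(t)\|_{L^1_{x,p}}=\|f_0\|_{L^1_{x,p}}$ uniformly in $n$; your version also works there, but you would have to add the (easy) remark that the $x$-supports of the $f^n$ grow at most linearly, uniformly in $n$. Your extra care about justifying the integration by parts in the Poynting identity via compact support of the fields is a fair point that the paper leaves implicit; it is the standard finite-propagation-speed argument and does not affect the substance.
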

\begin{proof} The structure of the Vlasov equation implies that $f$ is constant along characteristics. Therefore, all its $L^q$-norms are preserved in time. In particular, we have that $\left\|f_0\right\|_{L^1_{x,p}}=\left\|f(t)\right\|_{L^1_{x,p}}\equiv\left\|\rho(t)\right\|_{L^1_x}$, $0\leq t\leq T$. Since $\left|v\right|\leq1$, then $\left|j\right|\leq\rho$ and therefore    $$\left\|j(t)\right\|^2_{L^{2}_x}\leq\left\|\rho(t)\right\|_{L^{\infty}_x}\left\|\rho(t)\right\|_{L^{1}_x}\leq c_T.$$
 
On the other hand, if we multiply (\ref{Maxwell Evolution 1}) and (\ref{Maxwell Evolution 2}) by $E$ and $B$ respectively, add up the resultant equations and integrate over $\mathbb{R}^3$, we find that
\begin{equation}
\label{Uno}
\frac{1}{2}\frac{d}{dt}\int_{\mathbb{R}^3}\left(\left|E(t,x)\right|^2+\left|B(t,x)\right|^2\right)dx=-\int_{\mathbb{R}^3}j(t,x)\cdot E(t,x)dx.
\end{equation}
The Cauchy-Schwarz and Young inequalities, together with the above estimate on the current density yield
\begin{equation}
\label{Dos}
\left|\int_{\mathbb{R}^3}j(t,x)\cdot E(t,x)dx\right|\leq c_T+\frac{1}{2}\int_{\mathbb{R}^3}\left(\left|E(t,x)\right|^2+\left|B(t,x)\right|^2\right)dx.
\end{equation}
Hence, we combine (\ref{Uno}) with (\ref{Dos}) and invoke the Gronwall lemma to conclude that the uniform estimate (\ref{Assumption 3}) indeed holds.  
\end{proof}

We notice that although a $C^1$ solution to (\ref{Vlasov Equation}-\ref{Density and Current}) preserves the total energy in time \cite{Glassey Book} (i.e., the quantity 
\begin{displaymath}
\int\int_{\mathbb{R}^3\times\mathbb{R}^3}\sqrt{1+\left|p\right|^2}f(t,x,p)dxdp+\frac{1}{8\pi}
\int_{\mathbb{R}^3}\left(\left|E(t,x)\right|^2+\left|B(t,x)\right|^2\right)dx
\end{displaymath}
does not depend on $t$, which in turn implies (\ref{Assumption 3})), the approximate solutions used in the next section may not. Nevertheless, they do satisfy (\ref{Uno}), and so we can use the previous lemma instead.    

Now, for an arbitrary non-negative function $g$, define the integrals
 \begin{eqnarray}
I_k(g;t) & = & \int^t_0\int^s_0(s-\sigma)^k\int_{\left|\omega\right|=1}g(\sigma,X(s)-\omega(s-\sigma))
d\omega d\sigma ds\nonumber\\ 
         & = & \int^t_0\int^t_{\sigma}(s-\sigma)^k\int_{\left|\omega\right|=1}g(\sigma,X(s)-\omega(s-\sigma))
         d\omega dsd\sigma\nonumber,
 \end{eqnarray}
with $k=0,1$. We further define $\mathcal{I}_k(g;\sigma,t)$ as
\begin{equation}
\label{Pallard Integral}
\mathcal{I}_k(g;\sigma,t)=\int^t_{\sigma}(s-\sigma)^k\int_{\left|\omega\right|=1}g(\sigma,X(s)-\omega(s-\sigma))d\omega ds
\end{equation}
such that
\begin{displaymath}
   I_k(g;t)=\int^t_0\mathcal{I}_k(g;\sigma,t)d\sigma.
\end{displaymath}

If we use the representation of the electric field introduced in the previous section and apply the 
estimates (\ref{Estimate E_S}) and (\ref{Estimate E_T}) to (\ref{Estimate Characteristics Like}), we find that 

\begin{equation}
\label{Working Equation}
   W(t)\leq W(0)+c_T+cI_0(h;t)+cI_1(h\bar{\left|K\right|};t).
\end{equation}

In \cite{Pallard}, Pallard derived useful estimates on $\mathcal{I}_k(g;\sigma,t)$. 
However, unlike in \cite{Pallard}, we aim for estimates in terms of the function $\bar{W}(t)$ introduced above. Hence, we shall need some modifications of Pallard's estimates, which we present in the following lemma.

\begin{lemma} For $0\leq\sigma<t$, the integrals $\mathcal{I}_k(g;\sigma,t)$, $k=0,1$, satisfy 
 \begin{eqnarray}
\label{Integral 0}
\mathcal{I}_0(g;\sigma,t) & \leq & c\left\|g(\sigma)\right\|_{L^{\infty}_x}t\\ 
\mathcal{I}_1(g;\sigma,t) & \leq & c\frac{\left\|g(\sigma)\right\|_{L^2_x}}{\sqrt{t-\sigma}}\int^t_{\sigma}\left[1+\ln\bar{W}(s)\right]ds.
\label{Integral 1}
\end{eqnarray}
\end{lemma}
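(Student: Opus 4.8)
The bound (\ref{Integral 0}) is dispatched immediately: for each $s$ the inner spherical integral is at most $4\pi\|g(\sigma)\|_{L^{\infty}_x}$, and integrating that constant over $s\in[\sigma,t]$ costs the factor $t-\sigma\le t$.

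For (\ref{Integral 1}) the plan is a change of variables on the backward light cone of the characteristic, followed by the Cauchy--Schwarz inequality. Substituting $\tau=s-\sigma$ and setting $L=t-\sigma>0$,
\[
\mathcal{I}_1(g;\sigma,t)=\int_0^L\tau\int_{|\omega|=1}g\bigl(\sigma,X(\sigma+\tau)-\tau\omega\bigr)\,d\omega\,d\tau .
\]
I would introduce $\Phi(\tau,\omega)=X(\sigma+\tau)-\tau\omega$ on $(0,L)\times S^2$ (a $C^1$ map, since $(f,E,B)$ is a $C^1$ solution). Its $\tau$-derivative is $\dot{X}(\sigma+\tau)-\omega$ and its two angular derivatives span the tangent plane to the sphere scaled by $\tau$, so the Jacobian factor is $|J(\tau,\omega)|=\tau^2\,\bigl|\omega\cdot(\dot{X}(\sigma+\tau)-\omega)\bigr|=\tau^2\,\bigl(1-\dot{X}(\sigma+\tau)\cdot\omega\bigr)$, where the last equality and the nondegeneracy for $\tau>0$ use $|\dot{X}(\sigma+\tau)|<1$ from (\ref{Condition Finite Momentum}). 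Moreover $\Phi$ is injective: if $y=\Phi(\tau,\omega)$ then $|X(\sigma+\tau)-y|=\tau$, and since $\bigl|\tfrac{d}{d\tau}|X(\sigma+\tau)-y|\bigr|\le|\dot{X}(\sigma+\tau)|<1$, the function $\tau\mapsto|X(\sigma+\tau)-y|-\tau$ is strictly decreasing, so $\tau$ — and then $\omega=(X(\sigma+\tau)-y)/\tau$ — is uniquely determined by $y$.

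Writing $\tau=\sqrt{|J|}\,\bigl(1-\dot{X}(\sigma+\tau)\cdot\omega\bigr)^{-1/2}$ and applying Cauchy--Schwarz on $(0,L)\times S^2$,
\[
\mathcal{I}_1(g;\sigma,t)\le\Bigl(\int_0^L\int_{S^2}g\bigl(\sigma,\Phi(\tau,\omega)\bigr)^2|J|\,d\omega\,d\tau\Bigr)^{1/2}\Bigl(\int_0^L\int_{S^2}\frac{d\omega\,d\tau}{1-\dot{X}(\sigma+\tau)\cdot\omega}\Bigr)^{1/2}.
\]
By the change of variables just set up, the first factor equals $\|g(\sigma)\|_{L^2_x}$. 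For the second, the elementary identity $\int_{S^2}\frac{d\omega}{1-w\cdot\omega}=\frac{2\pi}{|w|}\ln\frac{1+|w|}{1-|w|}\le c\bigl(1+\ln\tfrac{1}{1-|w|}\bigr)$ for $|w|<1$, applied with $w=\dot{X}(s)$ and combined with $|\dot{X}(s)|=|V(P(s))|\le\bar{v}(s)$ and $1-\bar{v}(s)\ge\tfrac12\bar{W}(s)^{-2}$ (immediate from $\bar{W}^2-\bar{p}^2=1$ and $\bar{p}\le\bar{W}$), gives $\int_{S^2}\frac{d\omega}{1-\dot{X}(s)\cdot\omega}\le c\bigl(1+\ln\bar{W}(s)\bigr)$. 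Hence the second factor is at most $\bigl(c\int_\sigma^t(1+\ln\bar{W}(s))\,ds\bigr)^{1/2}$. Finally, since $1+\ln\bar{W}(s)\ge1$ forces $\int_\sigma^t(1+\ln\bar{W}(s))\,ds\ge t-\sigma$, the square root is bounded by $(t-\sigma)^{-1/2}\int_\sigma^t(1+\ln\bar{W}(s))\,ds$, which is (\ref{Integral 1}).

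The delicate points are the justification of the change of variables — the Jacobian identity and, above all, the injectivity of $\Phi$, for which the strict inequality $|\dot{X}|<1$ in (\ref{Condition Finite Momentum}) is indispensable — together with the logarithmic spherical-integral estimate, which is where the $\ln\bar{W}$ term originates; everything else is bookkeeping.
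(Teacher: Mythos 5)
Your proof is correct and follows essentially the same route as the paper: Cauchy--Schwarz with the backward light-cone change of variables (Jacobian $\sim(s-\sigma)^2(1-\dot X\cdot\omega)$), the logarithmic bound on $\int_{S^2}(1-\dot X\cdot\omega)^{-1}d\omega$ via $1-\bar v\geq c\,\bar W^{-2}$, and the final trick $\bigl(\int_\sigma^t[1+\ln\bar W]\bigr)^{1/2}\leq(t-\sigma)^{-1/2}\int_\sigma^t[1+\ln\bar W]$. The only difference is that you verify the diffeomorphism (Jacobian and injectivity) directly, whereas the paper simply cites Pallard's Lemma 2.1; also note the first Cauchy--Schwarz factor is $\leq\|g(\sigma)\|_{L^2_x}$ rather than equal to it, since the map need not be onto, but that is all you need.
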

\begin{proof}
  The estimate (\ref{Integral 0}) is straightforward from (\ref{Pallard Integral}) with $k=0$. As for (\ref{Integral 1}), let us first rewrite the integral $\mathcal{I}_1$ in spherical coordinates
\begin{displaymath}
\mathcal{I}_1(g;\sigma,t)=\int^t_{\sigma}(s-\sigma)\int^{\pi}_0\int^{2\pi}_0g(\sigma,X(s)-\omega(s-\sigma))
\sin\phi d\theta d\phi ds,
\end{displaymath}
where $\omega=\omega(\theta,\phi)=(\cos\theta\sin\phi,\sin\theta\sin\phi,\cos\phi)$. As shown in \cite[Lemma 2.1]{Pallard}, condition (\ref{Condition Finite Momentum}) implies that the 
transformation $\pi_{\sigma},$ defined by 
\begin{displaymath}
   (s,\theta,\phi)\mapsto X(s)-\omega(s-\sigma)
\end{displaymath}
is a $C^1$-diffeomorphism whose Jacobian determinant has the form 
\begin{displaymath}
   J\pi_{\sigma}(s,\theta,\phi)=\left(\dot{X}(s)\cdot\omega-1\right)(s-\sigma)^2\sin\phi.
\end{displaymath}
Thus, the Cauchy-Schwarz inequality implies that
 \begin{eqnarray}
  \mathcal{I}_1(g;\sigma,t) & \leq & \left(\int^t_{\sigma}\int^{\pi}_0\int^{2\pi}_0g^2(\sigma,X(s)-\omega(s-\sigma))\left|J\pi_{\sigma}(s,\theta,\phi)\right|d\theta d\phi ds\right)^{\frac{1}{2}}\nonumber\\
                            &      & \times\left(\int^t_{\sigma}\int^{\pi}_0\int^{2\pi}_0\frac{\sin\phi d\theta d\phi ds}{1-\dot{X}(s)\cdot\omega}\right)^{\frac{1}{2}}\nonumber\\
                            & \leq & \left\|g(\sigma)\right\|_{L^2_x}\left(\int^t_{\sigma}\int^{\pi}_0\int^{2\pi}_0\frac{\sin\phi d\theta d\phi ds}{1-\dot{X}(s)\cdot\omega}\right)^{\frac{1}{2}}\nonumber.
 \end{eqnarray}
We estimate the integral
 \begin{eqnarray}
   \int^{\pi}_0\int^{2\pi}_0\frac{\sin\phi d\theta d\phi}{1-\dot{X}(s)\cdot\omega} & = 
   & \int^1_{-1}\frac{2\pi du}{1-|\dot{X}(s)|u}\nonumber\\
             & \leq & c\left(1+|\ln(1-\bar{v}(s))|\right)\nonumber\\
             & \leq & c\left(1+\ln\bar{W}(s)\right),\nonumber
 \end{eqnarray}
and therefore,
 \begin{eqnarray}
  \mathcal{I}_1(g;\sigma,t) & \leq &  c\left\|g(\sigma)\right\|_{L^2_x}\left(\int^t_{\sigma}
  \left[1+\ln\bar{W}(s)\right]ds\right)^{\frac{1}{2}}\nonumber\\
                            & \leq & c\frac{\left\|g(\sigma)\right\|_{L^2_x}}{\sqrt{t-\sigma}}
                            \int^t_{\sigma}\left[1+\ln\bar{W}(s)\right]ds.\nonumber
 \end{eqnarray}
In the final step we have used that $0<t-\sigma\leq\int_{\sigma}^t\left[1+\ln\bar{W}(s)\right]ds$. This completes the proof of the lemma.
\end{proof}

From (\ref{W Bounds h}) we have that $\left\|h(\sigma)\right\|_{L^{\infty}_x}\leq c_T\bar{W}(\sigma)$. 
Together with Lemma 1 this implies   
\begin{displaymath}
\left\|(h\bar{\left|K\right|})(\sigma)\right\|_{L^2_x}\leq \left\|h(\sigma)\right\|_{L^{\infty}_x}
\left\|\bar{\left|K\right|}(\sigma)\right\|_{L^2_x}\leq c_T\bar{W}(\sigma).
\end{displaymath}

Combining these inequalities and Lemma 2, we find that the integrals $I_0(h;t)$ and $I_1(h\bar{\left|K\right|};t)$ satisfy the estimates
\begin{eqnarray}
\label{Estimate I_0}
   I_0(h;t) & \leq & c_Tt\int^t_0\bar{W}(s)ds,\\ 
\label{Estimate I_1}
   I_1(h\bar{\left|K\right|};t) & \leq & c_T\int^t_0\int^t_{\sigma}\frac{\bar{W}(\sigma)}{\sqrt{t-\sigma}}\left[1+\ln\bar{W}(s)\right]dsd\sigma\\
   & = & c_T\int^t_0\int^s_0\frac{\bar{W}(\sigma)}{\sqrt{t-\sigma}}\left[1+\ln\bar{W}(s)\right]d\sigma ds\nonumber\\
   & \leq & c_T\sqrt{t}\int^t_0\bar{W}(s)\left[1+\ln\bar{W}(s)\right]ds\nonumber,
\end{eqnarray}
where in the last inequality we have used that $\bar{W}(t)$ is non-decreasing in $t$.

Finally, (\ref{Estimate I_0}) and (\ref{Estimate I_1}) combined with (\ref{Working Equation}) yield  
\begin{equation}
\label{The Most Important One}
   \bar{W}(t)\leq \bar{W}(0)+c_T+ c_T\int^t_0\bar{W}(s)\left[1+\ln\bar{W}(s)\right]ds. \end{equation}


\section{Proof of Theorem 1}

We recall the recursive method introduced in \cite{GS1} to generate a sequence of approximate solutions. A density argument allows to consider the following initial conditions:

Let $f_0\in C^2_0(\mathbb{R}^6)$, $(E_0,B_0)\in C^3_0(\mathbb{R}^3)$ and $(\partial_tE,\partial_tB)_0$ be in $C^2(\mathbb{R}^3)$. Let also $f^0(t,x,p)=f_0(x,p)$ and $(E^0,B^0)(t,x)=(E_0,B_0)(x)$.

Define $f^n$ as the solution of
\begin{eqnarray}
\label{Vlasov Recursive}
\partial_t f^n+v\cdot\nabla_{x}f^n+K^{n-1}\cdot\nabla_{p}f^n=0\hspace{2.cm}\\
f^n(0,x,p)=f_0(x,p),\hspace{4.3cm}\nonumber
\end{eqnarray}
(which is linear when $E^{n-1}$ and $B^{n-1}$ are given). Since $f_0\in C^2$, then $f^{n}$ is a $C^2$ function provided that $E^{n-1}$ and $B^{n-1}$ are $C^2$ as well. Morever, $f^{n}$ is constant along the characteristics of (\ref{Vlasov Recursive})
 \begin{eqnarray}
    \dot{X}_n(t,x_0,p_0) & = & V(P_n(t,x_0,p_0))\\
    \dot{P}_n(t,x_0,p_0) & = & K^{n-1}(t,X_n(t,x_0,p_0),P_n(t,x_0,p_0)).\nonumber
 \end{eqnarray}

Hence, for each $n$ we have that $f^{n}$ has compact support in $p$ provided that $E^{n-1}$ and $B^{n-1}$ are bounded functions. It follows that $\rho^{n}=\int f^{n}dp$ and $j^{n}=\int vf^{n}dp$ are well defined in $C^2$. Integrating (\ref{Vlasov Recursive}) over all $p\in\mathbb{R}^3$ yields the continuity equation $\partial_t\rho^n+\nabla_xj^n=0$. Then, we can define $(E^{n},B^{n})$ as the solution of  
\begin{eqnarray}
\label{Maxwell First Form}
\partial_tE^n-\nabla\times B^n & = & -j^n, \hspace{1.0cm} \nabla\cdot E^n=\rho^n\\
\partial_tB^n+\nabla\times E^n & = & 0, \hspace{1.45cm} \nabla\cdot B^n=0,\nonumber
\end{eqnarray}
which also solves the Maxwell equations in the second order form
 \begin{eqnarray}
 \label{Wave Recursive}
     \left(\partial^2_t-\Delta\right)E^{n} & = &-\nabla\rho^{n}-\partial_tj^{n}\\
 \left(\partial^2_t-\Delta\right)B^{n} & = & \nabla\times j^{n}\nonumber
 \end{eqnarray}
with the given Cauchy data.
\begin{lemma}[Glassey \& Strauss] 
  If $f^{n}\in C^2$ solves (\ref{Vlasov Recursive}), and $(E^{n},B^{n})$ solves (\ref{Wave Recursive}), then $E^{n}$ and $B^{n}$ are $C^2$ bounded functions. 
\end{lemma}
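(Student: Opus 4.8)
The plan is to prove the lemma by induction on $n$, the inductive step splitting into (i) the regularity and support of the characteristic flow and of $f^n,\rho^n,j^n$, and (ii) the regularity and boundedness of $E^n,B^n$ solving the wave equations (\ref{Wave Recursive}).

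For $n=0$ there is nothing to do, since $(E^0,B^0)=(E_0,B_0)\in C^3_0(\mathbb{R}^3)\subset C^2\cap L^\infty$. Assume now that $E^{n-1},B^{n-1}$ are $C^2$ and bounded on $[0,T]\times\mathbb{R}^3$. Then $K^{n-1}(t,x,p)=E^{n-1}(t,x)+v(p)\times B^{n-1}(t,x)$ is $C^2$ in $(t,x,p)$ and bounded, so by standard ODE theory the characteristic system $\dot X_n=V(P_n)$, $\dot P_n=K^{n-1}(t,X_n,P_n)$ has a $C^2$ solution; it exists on all of $[0,T]$ because $|\dot X_n|<1$ and $|\dot P_n|\le\|E^{n-1}\|_\infty+\|B^{n-1}\|_\infty$ preclude blow-up, and for each fixed $t$ the flow is a $C^2$ diffeomorphism of $\mathbb{R}^6$. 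Hence $f^n$, being $f_0$ composed with the inverse flow, is $C^2$; and since on $[0,T]$ the inverse flow displaces $|x|$ by at most $T$ and $|p|$ by at most $(\|E^{n-1}\|_\infty+\|B^{n-1}\|_\infty)T$, the support of $f^n(t,\cdot,\cdot)$ lies in a compact set independent of $t\in[0,T]$. Therefore $\rho^n=\int_{\mathbb{R}^3}f^n\,dp$ and $j^n=\int_{\mathbb{R}^3}vf^n\,dp$ are $C^2$ in $(t,x)$ with $x$-support in a fixed compact set for $t\in[0,T]$, and integrating (\ref{Vlasov Recursive}) in $p$ gives the continuity equation.

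For the $C^2$ regularity of $E^n,B^n$ one cannot simply invoke the classical Kirchhoff representation for $(\partial_t^2-\Delta)u=F$: the sources $-\nabla\rho^n-\partial_tj^n$ and $\nabla\times j^n$ in (\ref{Wave Recursive}) are only $C^1$, being first derivatives of the $C^2$ functions $\rho^n,j^n$, whereas in three space dimensions a $C^2$ solution requires a $C^2$ source. This loss of one derivative is exactly what the Glassey--Strauss representation (\ref{Full Representation})--(\ref{E_T}) repairs: substituting $\partial_tf^n=-v\cdot\nabla_xf^n-K^{n-1}\cdot\nabla_pf^n$ from the Vlasov equation and integrating by parts moves the derivative off $f^n$ and onto the explicit smooth kernels $a^i,b^i$, so that $E^n_T,E^n_S$ (and the analogous terms for $B^n$) become integrals of $f^n$ and $K^{n-1}f^n$ against integrable kernels. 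Differentiating this representation once more, and again trading $(t,x)$-derivatives of $f^n$ for $p$-derivatives via the Vlasov equation, exhibits $\nabla_{t,x}E^n$ and $\nabla_{t,x}B^n$ as continuous functions. This is carried out in \cite[Sections 3--4]{GS1} and applies here verbatim, since $f^n,K^{n-1}\in C^2$; hence $E^n,B^n\in C^2$. I expect this regularity step to be the main obstacle, precisely because of the derivative mismatch just described.

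It remains to bound $E^n,B^n$. By the inductive hypothesis $\bar{|K^{n-1}|}=|E^{n-1}|+|B^{n-1}|$ is bounded on $[0,T]\times\mathbb{R}^3$, so (\ref{Estimate Characteristics}), applied to $(X_n,P_n)$, shows the $p$-support of $f^n$ is bounded uniformly for $t\in[0,T]$; together with $\|f^n\|_\infty=\|f_0\|_\infty$ this yields $h^n(t,x)=\int_{\mathbb{R}^3}\sqrt{1+|p|^2}f^n\,dp\le c_T$ on $[0,T]\times\mathbb{R}^3$. The data part $(E^i)_0$ of the representation is bounded, being built from $C^3_0$ data, and the estimates (\ref{Estimate E_S}) and (\ref{Estimate E_T}), valid for $E^n_T,E^n_S$ and for $B^n_T,B^n_S$, bound the remaining pieces by $c\int_0^t\int_{|\omega|=1}h^n\,d\omega\,ds\le c_Tt$ and $c\int_0^t(t-s)\int_{|\omega|=1}h^n\bar{|K^{n-1}|}\,d\omega\,ds\le c_Tt^2$, respectively. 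Hence $\|E^n(t)\|_\infty+\|B^n(t)\|_\infty\le c_T$ for all $t\in[0,T]$, which closes the induction.
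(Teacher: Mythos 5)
The paper's own ``proof'' of this lemma is a bare citation to \cite[Lemma 7]{GS1}, and your sketch reconstructs essentially that argument: the inductive set-up, the compact $p$-support of $f^n$ from the bounded $K^{n-1}$, the identification of the derivative-loss difficulty for the wave equations (\ref{Wave Recursive}), its resolution via the representation (\ref{Full Representation})--(\ref{E_T}), and the $L^\infty$ bound on $E^n,B^n$ from $h^n\le c_T$ and the estimates (\ref{Estimate E_S})--(\ref{Estimate E_T}) all match the cited Glassey--Strauss scheme and are carried out correctly. Note only that the decisive analytic step --- the $C^2$ regularity of $E^n,B^n$ --- is in your write-up still delegated to \cite[Sections 3--4]{GS1}, so your proof ultimately rests on the same citation as the paper's, and that your induction supplies hypotheses (boundedness of $E^{n-1},B^{n-1}$, compact support of $f^n$) which are not in the lemma's literal statement but are provided by the surrounding iteration described in the text.
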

\begin{proof} See \cite[Lemma 7]{GS1}.
\end{proof}

Now, let $\bar{p}_n(t)=\sup\left\{\left|p\right|:\exists 0\leq s\leq t, x\in\mathbb{R}^3:f^{n}(s,x,p)\neq0\right\}$ and define $\bar{v}_n(t)$ and $\bar{W}_n(t)$ according to (\ref{Definitions}). Since $f^n$ has compact support in $p$, then (\ref{Condition Finite Momentum}) holds for each $n\in\mathbb{N}$. On the other hand, the assumption of Theorem 1 is
\begin{displaymath}
   \sup_{0\leq t\leq T}\sup_{x\in\mathbb{R}^3}\int_{\mathbb{R}^3}f^{n}(t,x,p)dp\leq c_T 
\end{displaymath} 
uniformly in $n$. Then $(f^{n},E^{n},B^{n})$ satisfies the two assumptions made in the previous section. In particular, (\ref{Condition Finite Momentum}) holds for each $n$ and (\ref{Assumption 2}) uniformly in $n$. In addition, by using the equations in (\ref{Maxwell First Form}) and noticing that $\left\|f^n(t)\right\|_{L^1_{x,p}}=\left\|f_0\right\|_{L^1_{x,p}}$ for all $0\leq t\leq T$ and $n\in\mathbb{N}$, the proof of Lemma \ref{Poynting} applies in exactly the same way for the approximate solutions, so the estimate (\ref{Assumption 3}) holds uniformly in $n$ as well. By virtue of (\ref{Estimate Characteristics Like}) we also have 
\begin{displaymath}
  W_n(t)\leq\ W_n(0)+\int^t_0\left| E^{n-1}(s,X_n(s))\right|ds
\end{displaymath}
for each $0\leq t\leq T$. Hence, recalling that $\bar{W}_n(t)=\sqrt{1+\bar{p}^2_n(t)}$, it is not difficult to check that (\ref{The Most Important One}) becomes
\begin{displaymath}
   \bar{W}_n(t)\leq \bar{W}_0+c_T+c_T\int^t_0\bar{W}_{n-1}(s)\left[1+\ln\bar{W}_n(s)\right]ds, 
\end{displaymath}
where $\bar{W}_0\equiv\bar{W}_n(0)$. If we set $$\tilde{W}_n(t)= \sup_{k\leq n}\bar{W}_k(t)$$ we deduce the logarithmic Gronwall inequality 
 \begin{equation}
 \tilde{W}_{n}(t)\leq\bar{W}_0+c_T+c_T\int^t_0\tilde{W}_{n}(s)\left[1+\ln\tilde{W}_{n}(s)
 \right]ds.\nonumber 
 \end{equation}
Then, owing to the compact $p$-support of $f_0$, we obtain the uniform bound
$$\sup_{0\leq t\leq T}\sup_{n\in\mathbb{N}}\bar{p}_n(t)\leq\sup_{0\leq t\leq T}\sup_{n\in\mathbb{N}}\tilde{W}_{n}(t)\leq c_T
$$ 
with a constant $c_T$ not depending on $n$. But this is the assumption made in \cite[Theorem 1]{GS1}. Therefore, the sequence $\left\{(f^{n},E^{n},B^{n})\right\}$ converges uniformly in $t\in[0,T]$, $x\in\mathbb{R}^3$ and $p\in\mathbb{R}^3$ to the unique classical solution of the RVM system. This concludes the proof of the theorem.

\end{document}